\documentclass[conference]{IEEEtran}
\usepackage[cmex10]{amsmath}
\usepackage{amssymb}
\usepackage{graphicx}
\usepackage{mathptmx}

\DeclareMathAlphabet{\mathcal}{OMS}{cmsy}{m}{n}

\allowdisplaybreaks[0]

\newtheorem{theorem}{Theorem}

\newtheorem{lemma}[theorem]{Lemma}
\newtheorem{definition}[theorem]{Definition}
\newtheorem{corollary}[theorem]{Corollary}

\newcommand{\sasoglu}{{\c S}a{\c s}o{\u g}lu}

\begin{document}
\title{Non-Binary Polar Codes using Reed-Solomon Codes and Algebraic Geometry Codes}


\author{
\IEEEauthorblockN{Ryuhei Mori and Toshiyuki Tanaka}
\IEEEauthorblockA{Graduate School of Informatics\\
Kyoto University \\
Kyoto, 606--8501, Japan\\
Email: rmori@sys.i.kyoto-u.ac.jp, tt@i.kyoto-u.ac.jp}
}

\if0
\author{
\IEEEauthorblockN{Ryuhei Mori}
\IEEEauthorblockA{Graduate School of Informatics\\
Kyoto University \\
Kyoto, 606--8501, Japan\\
Email: rmori@sys.i.kyoto-u.ac.jp}
\and
\IEEEauthorblockN{Toshiyuki Tanaka}
\IEEEauthorblockA{Graduate School of Informatics\\
Kyoto University\\
Kyoto, 606--8501, Japan\\
Email: tt@i.kyoto-u.ac.jp}
}
\fi

\maketitle

\begin{abstract}
Polar codes, introduced by Ar{\i}kan, achieve symmetric capacity of any discrete memoryless channels
under low encoding and decoding complexity.
Recently, non-binary polar codes have been investigated.
In this paper, we calculate error probability of non-binary polar codes constructed on the basis of
Reed-Solomon matrices by numerical simulations.
It is confirmed that 4-ary polar codes have significantly better performance than binary polar codes
on binary-input AWGN channel.
We also discuss an interpretation of polar codes in terms of algebraic geometry codes, 
and further show that polar codes using Hermitian codes have asymptotically good performance.
\end{abstract}
\section{Introduction}
Ar{\i}kan~\cite{5075875} proposed polar codes as codes that achieve 
symmetric capacity of arbitrary binary-input discrete memoryless channels (DMCs)
under low-complexity encoding and decoding. 
Asymptotic error probability of polar codes has been studied in detail by 
Ar{\i}kan and Telatar~\cite{arikan2008rcp}, who showed 
that the error probability is $o(2^{-N^\beta})$ for any $\beta<1/2$
and $\omega(2^{-N^\beta})$ for any $\beta>1/2$, where $N$ is the blocklength.
Although polar codes are constructed on the basis of 
a Kronecker power of the matrix $\begin{bmatrix}1&0\\1&1\end{bmatrix}$ 
in Ar\i kan's original proposal, 
one can extend polar codes using a different matrix.  
Korada, \sasoglu, and Urbanke discussed such extensions 
to improve the threshold of $\beta$, which equals 1/2 
when polar codes are constructed using the $2\times 2$ matrix mentioned above, 
and showed that the threshold can indeed be made larger 
by using a larger matrix~\cite{korada2009pcc}.

Another important direction of extending polar codes 
is to consider non-binary input alphabet.  
\sasoglu, Telatar, and Ar{\i}kan considered non-binary polar codes and showed that
polar codes achieve symmetric capacity when the size of input alphabet is a prime~\cite{sasoglu2009pad}.
They also showed that one can still obtain capacity-achieving codes 
even when the size of input alphabet is not a prime, 
by decomposing the original channel to multiple channels, each of which 
has input alphabet whose cardinality is a prime, and
by using a polar code for each of these channels~\cite{5351487}.
This method of decomposition is also known as multilevel coding~\cite{1055718}.

In~\cite{mori2010cpq}, the authors discussed the case in which the size of input alphabet is an integer power of a prime, 
and showed that polar codes defined on the input alphabet can achieve,
without the decomposition,
symmetric capacity and that use of a larger matrix can improve 
the asymptotic error probability, similarly to the binary-input case.
Furthermore, it is shown
that Reed-Solomon matrices can be regarded as a natural generalization of the binary $2\times 2$ matrix, 
providing a family of polar codes with various nice properties.  

In this paper, we calculate error probability of non-binary polar codes
using Reed-Solomon matrices on the $q$-ary erasure channels by numerical simulations.
We further show that polar codes using Hermitian codes have asymptotically good performance.

\section{Non-Binary Polar Codes}
Assume that $q$ is an integer power of a prime.
Let $W:\mathbb{F}_q\to\mathcal{Y}$ be a $q$-ary DMC and $G$ be an $\ell\times\ell$ matrix on $\mathbb{F}_q$.
An $\ell^n\times \ell^n$ matrix $G_{\ell^n}$ is defined as
$G_{\ell^n} := G^{\otimes n}$, where ${}^\otimes$ is the Kronecker power.
Let $u_0^{\ell^n-1}$ be a row vector $(u_0,\dotsc,u_{\ell^n-1})$ and 
$u_i^j$ be its subvector $(u_i,\dotsc,u_j)$.
Let $u_\mathcal{F}$ be a subvector $(u_{f_0},\dotsc,u_{f_{m-1}})$ of $u_0^{\ell^n-1}$
where $\mathcal{F} = \{f_0,\dotsc,f_{m-1}\} \subseteq \{0,\dotsc,\ell^n-1\}$.
Let $\mathcal{F}^c$ be the complement of $\mathcal{F}$.
Let $W^{\ell^n}:\mathbb{F}_q^{\ell^n}\to\mathcal{Y}^{\ell^n}$ denote the DMC defined as
$W^{\ell^n}(y_0^{{\ell^n}-1}\mid u_0^{{\ell^n}-1}) := \prod_{i=0}^{{\ell^n}-1} W(y_i\mid u_i)$.
The $\ell$-ary bit-reversal matrix $R_{\ell^n}$ of size $\ell^n$ is a permutation matrix
defined by $u_0^{\ell^n-1} R_{\ell^n} = (u_{r_0},\dotsc,u_{r_{\ell^n-1}})$
where $\ell$-ary expansion $a_1\dotsm a_n$ of $i$ and $\ell$-ary expansion $a_n\dotsm a_1$ of $r_i$ 
are the reversals of each other.
The encoder of polar codes is defined as $\phi(u_{\mathcal{F}^c}) := u_0^{\ell^n-1} R_{\ell^n} G_{\ell^n}$
where the all-zero vector is assigned to $u_\mathcal{F}$.
The matrix $G$ is called a kernel of the polar code.
The decoder of polar codes is a successive cancellation (SC) decoder.
For $i\in\{0,\dotsc,\ell^n-1\}$, $\hat{u}_0^{i-1}\in\mathbb{F}_q^i$ and $y_0^{\ell^n-1}\in\mathcal{Y}^{\ell^n}$,
let
\begin{equation*}
\psi_i\left(\hat{u}_0^{i-1}, y_0^{\ell^n-1}\right) = \mathop{\rm arg\,max}_{u_i\in\mathbb{F}_q}
P_{U_i\mid U_0^{i-1}, Y_0^{\ell^n-1}}\left(u_i\bigm| \hat{u}_0^{i-1}, y_0^{\ell^n-1}\right)
\end{equation*}
where $U_0^{\ell^n-1}$ and $Y_0^{\ell^n-1}$ are random variables which obey the distribution
\begin{multline*}
P_{U_0^{\ell^n-1}, Y_0^{\ell^n-1}}\left(u_0^{\ell^n-1}, y_0^{\ell^n-1}\right)\\
=\frac1{q^{\ell^n}}W^{\ell^n}\left(y_0^{\ell^n-1}\bigm| u_0^{\ell^n-1}R_{\ell^n}G_{\ell^n}\right).
\end{multline*}
An output $\hat{u}_0^{\ell^n-1}$ of the decoder is determined sequentially from $\hat{u}_0$ to $\hat{u}_{\ell^n-1}$ as
\begin{equation*}
\hat{u}_i = \begin{cases}
0, & \text{if } i \in \mathcal{F}\\
\psi_i\left(\hat{u}_0^{i-1}, y_0^{\ell^n-1}\right), & \text{otherwise.}
\end{cases}
\end{equation*}

Let
\begin{equation*}
P_{\ell^n}^{(i)} := P\left(\psi_i(U_0^{i-1}, Y_0^{\ell^n-1}) \ne U_i\right).
\end{equation*}
In order to obtain polar codes of small error probability, $\mathcal{F}^c$ have to be chosen such that
$P_{\ell^n}^{(i)}$ is small if $i\in\mathcal{F}^c$.
The error probabilities $\{P_{\ell^n}^{(i)}\}$ can be calculated by using density evolution~\cite{5205857}.

\section{Exponent of Polar Codes and Reed-Solomon kernel}
\subsection{Exponent of polar codes}
Asymptotic performance of polar codes is determined by a kernel $G$.
Ar{\i}kan and Telatar showed that when $q=2$ and $G=\begin{bmatrix}1&0\\1&1\end{bmatrix}$,
the error probability of polar codes is $o(2^{-2^{\beta n}})$ for any $\beta<1/2$
and $\omega(2^{-2^{\beta n}})$ for any $\beta>1/2$~\cite{arikan2008rcp}.
Korada, \sasoglu, and Urbanke~\cite{korada2009pcc} generalized the result to any $G$ when $q=2$, 
showing that there exists a function $E(G)\in[0,1)$ such that
the error probability of polar codes is $o(2^{-\ell^{\beta n}})$ for any $\beta<E(G)$
and $\omega(2^{-\ell^{\beta n}})$ for any $\beta>E(G)$.
The threshold $E(G)$ of $\beta$ is called the exponent of a kernel $G$.
They also showed that $\max_{G\in\mathbb{F}_2^{\ell\times\ell}} E(G)$ converges to 1 as $\ell\to\infty$.
They further proposed an explicit construction method of $G$ using BCH codes, 
for which the exponent $E(G)$ can be made arbitrarily close to 1 as $\ell$ becomes large~\cite{korada2009thesis}.

In~\cite{mori2010cpq}, the authors showed that the result about exponent can further be generalized to $q$-ary polar codes.
The exponent $E(G)$ of a kernel $G$ can easily be calculated for $q$ being equal to an integer power of a prime, as follows.
\begin{theorem}[\cite{korada2009pcc}]\label{thm:eg}
\begin{equation*}
E(G) = \frac1\ell \sum_{i=0}^{\ell-1} \log_\ell D_i
\end{equation*}
where the partial distance $D_i$ is defined as
\begin{equation*}
D_i := \min_{v_{i+1}^{\ell-1}\in\mathbb{F}_q^{\ell-i-1}} d\left((0_0^{i-1}, 0, v_{i+1}^{\ell-1})G, (0_0^{i-1}, 1, 0_{i+1}^{\ell-1})G\right).
\end{equation*}
In the above definition of $D_i$, $d(x,y)$ denotes the Hamming distance between $x$ and $y$.
\end{theorem}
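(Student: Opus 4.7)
The plan is to adapt the martingale/polarization argument of Arıkan and Telatar, in the refined form developed by Korada, \sasoglu, and Urbanke for larger binary kernels, to the $q$-ary setting as in \cite{mori2010cpq}. The argument splits naturally into a one-step analysis of the polar transform and an iterative step driven by the strong law of large numbers on a random index path.

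First I would introduce a $q$-ary Bhattacharyya-type parameter $Z(W)$, e.g.\ the averaged pairwise quantity $Z(W) = \frac{1}{q(q-1)}\sum_{a\neq a'}\sum_y\sqrt{W(y|a)W(y|a')}$, with the property that the block error probability of a SC decoder is controlled by the sum of the $Z$ parameters of the synthetic channels, and that $Z(W)$ tends to $0$ or $1$ exactly when the channel becomes noiseless or useless. Then I would perform the one-step analysis: for a kernel $G$, the synthetic channel $W^{(i)}$ seen by $u_i$ given $y_0^{\ell-1}$ and $u_0^{i-1}$ satisfies
\begin{equation*}
c_1\,Z(W)^{D_i}\;\leq\; Z(W^{(i)})\;\leq\; c_2\,Z(W)^{D_i},
\end{equation*}
for constants $c_1,c_2$ depending only on $q$ and $\ell$. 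The upper bound is obtained by selecting, for each pair $(u_i,u_i')$ of candidates, a completion $v_{i+1}^{\ell-1}$ realizing the partial distance in the definition of $D_i$; the resulting two codewords disagree in exactly $D_i$ positions, so the Bhattacharyya sum factorizes as a product of $D_i$ copies of $Z(W)$. The lower bound follows because no completion can bring the Hamming distance below $D_i$, so a single confusable pair contributes at least $Z(W)^{D_i}$ up to constants.

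Second I would iterate. After $n$ transforms, the $\ell^n$ synthetic channels are indexed by sequences $b_1^n\in\{0,\dotsc,\ell-1\}^n$, and repeated application of the one-step bound gives $Z(W^{(b_1^n)})$ sandwiched between constants times $Z(W)^{\prod_{k=1}^n D_{b_k}}$. Letting $(B_k)_{k\geq1}$ be i.i.d.\ uniform on $\{0,\dotsc,\ell-1\}$, the strong law of large numbers yields
\begin{equation*}
\frac1n\sum_{k=1}^n\log_\ell D_{B_k}\;\xrightarrow{\text{a.s.}}\;\frac1\ell\sum_{i=0}^{\ell-1}\log_\ell D_i = E(G),
\end{equation*}
so that for any $\beta<E(G)$ the Bhattacharyya parameter on a typical synthetic channel is below $2^{-\ell^{\beta n}}$ with probability tending to the symmetric capacity, while for $\beta>E(G)$ it exceeds this threshold with probability bounded away from zero. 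Combined with the $q$-ary polarization statement from \cite{mori2010cpq}, which guarantees that a fraction equal to the symmetric capacity of synthetic channels satisfies $Z\to 0$, this yields both halves of the exponent characterization via the standard translation between $Z$ and $P_{\ell^n}^{(i)}$.

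The main obstacle is the tight one-step Bhattacharyya bound in the $q$-ary, not-necessarily-symmetric setting: the binary symmetrization that gives the cleanest factorization does not carry over verbatim, and one must check that the constants $c_1,c_2$ can be chosen independently of $W$ and that no accidental cancellations arise over $\mathbb{F}_q$ with $q$ a prime power rather than a prime. The $b_k=0$ coordinate (which gives $D_0$ potentially equal to $1$) also requires separate attention in the SLLN step, to confirm that the rare excursions into degenerate synthetic channels do not spoil the almost-sure limit.
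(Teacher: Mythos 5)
The paper itself contains no proof of this statement: Theorem~\ref{thm:eg} is imported verbatim from \cite{korada2009pcc}, with the $q$-ary extension established in \cite{mori2010cpq}. Your outline follows essentially the same route as those references --- a Bhattacharyya-type parameter, a one-step estimate governed by the partial distances $D_i$, the strong law of large numbers applied to $\frac1n\sum_k\log_\ell D_{B_k}$ along a uniform random index path, and a bootstrap via the polarization theorem --- so as a plan it is the standard one.

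Two points in your one-step analysis are stated more strongly than what is actually available. First, the sandwich $c_1 Z(W)^{D_i}\le Z(W^{(i)})\le c_2 Z(W)^{D_i}$ with the \emph{same averaged} $Z$ on both sides is not what the argument delivers over $\mathbb{F}_q$: dropping all but the minimizing completion in the inner sums gives a lower bound of the form $c\,\bigl(\min_{a\ne a'}Z_{a,a'}(W)\bigr)^{D_i}$, while the union-bound upper bound is in terms of $\max_{a\ne a'}Z_{a,a'}(W)$, and $\min_{a\ne a'}Z_{a,a'}$ is not controlled from below by the average (it can vanish while the average is positive). One must therefore track the maximum and minimum pairwise parameters separately and show they polarize at the same speed; this is precisely the technical content of the $q$-ary extension, and you correctly flag it as the obstacle but do not resolve it. Second, iterating the one-step bound $n$ times accumulates a prefactor of order $c^{1+D_{b_n}+D_{b_n}D_{b_{n-1}}+\dotsb}$, which is doubly exponential in $n$ and of the same order as $Z^{\prod_k D_{b_k}}$ itself, so the sandwich does not survive iteration on its own; the cited proofs avoid this by first running plain polarization until $Z$ is below a fixed threshold on a capacity fraction of paths and only then invoking the partial-distance recursion, where the small value of $Z$ absorbs the constants. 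Your appeal to the polarization statement of \cite{mori2010cpq} gestures at this two-stage structure, but the proposal as written presents the iteration as a direct sandwich, which would fail.
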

\begin{definition}
$L(q, \ell):=\max_{G\in\mathbb{F}_q^{\ell\times\ell}}E(G)$.
\end{definition}
As in the case $q=2$~\cite{korada2009pcc}, 
the best exponent $L(q,\ell)$ can be lower bounded by using the Gilbert-Varshamov-like bound.
\begin{lemma}
\begin{equation*}
L(q, \ell) \ge \frac1\ell \sum_{i=0}^{\ell-1} \log_\ell \tilde{D}_i
\end{equation*}
where
\begin{equation*}
\tilde{D}_i := \max\left\{D\in\mathbb{N}\mid \sum_{j=0}^{D-1}\binom{\ell}{j} (q-1)^j < q^{i+1}\right\}.
\end{equation*}
\end{lemma}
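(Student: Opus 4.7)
The plan is to use Theorem~\ref{thm:eg} to reduce the lemma to the combinatorial task of exhibiting an $\ell \times \ell$ matrix $G$ over $\mathbb{F}_q$ whose partial distances satisfy $D_i \ge \tilde{D}_i$ for every $i$; such a matrix yields $E(G) \ge \frac{1}{\ell}\sum_i \log_\ell \tilde{D}_i$, and the maximum over $G$ is bounded below by this quantity. I would construct $G$ by a Gilbert--Varshamov-type greedy argument, building its rows in reverse order.

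The first step is to reinterpret the definition of $D_i$. Writing $g_0, \ldots, g_{\ell-1}$ for the rows of $G$, the vector $(0_0^{i-1},1,0_{i+1}^{\ell-1})G$ equals $g_i$, while $(0_0^{i-1},0,v_{i+1}^{\ell-1})G$ ranges over the $\mathbb{F}_q$-linear span $V_i := \mathrm{span}\{g_{i+1},\ldots,g_{\ell-1}\}$ as $v$ varies. Consequently,
\begin{equation*}
D_i = \min_{w \in V_i} d(g_i, w),
\end{equation*}
so that $D_i \ge \tilde{D}_i$ is precisely the statement that $g_i$ lies outside every Hamming ball of radius $\tilde{D}_i - 1$ centered at some element of $V_i$.

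Next I would construct the rows in the order $g_{\ell-1}, g_{\ell-2}, \ldots, g_0$. Having fixed $g_{i+1}, \ldots, g_{\ell-1}$, the set of vectors within Hamming distance $\tilde{D}_i - 1$ of $V_i$ has cardinality at most
\begin{equation*}
|V_i| \cdot \sum_{j=0}^{\tilde{D}_i - 1}\binom{\ell}{j}(q-1)^j \le q^{\ell-1-i}\sum_{j=0}^{\tilde{D}_i - 1}\binom{\ell}{j}(q-1)^j,
\end{equation*}
where the bound $|V_i| \le q^{\ell-1-i}$ uses only that $V_i$ is spanned by $\ell - 1 - i$ vectors. By the defining inequality of $\tilde{D}_i$, this upper bound is strictly less than $q^\ell$, so some $g_i \in \mathbb{F}_q^\ell$ avoids every such ball and thus satisfies $D_i \ge \tilde{D}_i$.

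The construction therefore succeeds at every step, producing a matrix $G$ whose partial distances dominate the $\tilde{D}_i$; applying Theorem~\ref{thm:eg} yields the claimed bound on $L(q,\ell)$. The argument is essentially a standard sphere-covering count, and the only mild subtlety I anticipate is bookkeeping the span dimension correctly, namely using the crude bound $|V_i| \le q^{\ell-1-i}$ rather than needing to track linear independence of the previously chosen rows (which in fact follows inductively from $\tilde{D}_i \ge 1$, though this is not required for the proof).
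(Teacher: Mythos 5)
Your proof is correct and is exactly the Gilbert--Varshamov-type greedy argument the paper intends (the paper states this lemma without proof, deferring to the analogous binary-case bound in the cited reference). The row-by-row construction from $g_{\ell-1}$ up to $g_0$, the sphere-covering count of $|V_i|\cdot\sum_{j=0}^{\tilde{D}_i-1}\binom{\ell}{j}(q-1)^j < q^{\ell}$ against the defining inequality of $\tilde{D}_i$, and the remark that $\tilde{D}_i\ge 1$ forces the rows to be linearly independent are all sound.
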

\begin{corollary}
\begin{equation*}
\lim_{\ell\to\infty} L(q, \ell) = 1
\end{equation*}
\end{corollary}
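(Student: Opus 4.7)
The plan is to sandwich $L(q,\ell)$ between matching bounds. The upper bound $L(q,\ell)\le 1$ is immediate from Theorem~\ref{thm:eg}: each partial distance satisfies $D_i\le \ell$, so $E(G)=\frac{1}{\ell}\sum_{i=0}^{\ell-1}\log_\ell D_i\le 1$ for every $G$, whence $L(q,\ell)\le 1$. It therefore suffices to show that the Gilbert--Varshamov-type lower bound from the preceding lemma tends to $1$, i.e., that $\frac{1}{\ell}\sum_{i=0}^{\ell-1}\log_\ell \tilde{D}_i \to 1$ as $\ell\to\infty$.

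To estimate $\tilde{D}_i$ I would use the standard $q$-ary entropy bound on Hamming-ball volume, $\sum_{j=0}^{D-1}\binom{\ell}{j}(q-1)^j\le q^{\ell H_q((D-1)/\ell)}$ for $(D-1)/\ell\le 1-1/q$, where $H_q$ denotes the $q$-ary entropy function. Setting this exponent equal to $i+1$ and inverting yields $\tilde{D}_i\ge \ell\,H_q^{-1}((i+1)/\ell) - O(1)$ in the relevant range. Now fix an arbitrary $\epsilon>0$ and split the average at $i=\lfloor\epsilon\ell\rfloor$. For $i<\epsilon\ell$, the trivial inequality $1<q^{i+1}$ gives $\tilde{D}_i\ge 1$, hence $\log_\ell \tilde{D}_i\ge 0$. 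For $i\ge\epsilon\ell$, the entropy bound supplies a positive constant $c(\epsilon)$, independent of $\ell$, with $\tilde{D}_i\ge c(\epsilon)\,\ell$ for all sufficiently large $\ell$, so $\log_\ell \tilde{D}_i\ge 1 + \log_\ell c(\epsilon)$. Averaging the two pieces gives
\begin{equation*}
\frac{1}{\ell}\sum_{i=0}^{\ell-1}\log_\ell \tilde{D}_i \;\ge\; (1-\epsilon)\bigl(1+\log_\ell c(\epsilon)\bigr),
\end{equation*}
and since $\log_\ell c(\epsilon)\to 0$ as $\ell\to\infty$ for fixed $\epsilon$, the $\liminf$ is at least $1-\epsilon$; letting $\epsilon\downarrow 0$ concludes.

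The only delicate point is that $\tilde{D}_i$ can be as small as $1$ when $i$ is close to $0$, so one cannot hope for a termwise estimate $\log_\ell \tilde{D}_i\to 1$. The averaging above absorbs these bad indices by confining them to an $\epsilon$-fraction of the sum, while exploiting the large-$\ell$ behaviour of $\log_\ell c(\epsilon)$ on the remaining indices. I expect this exchange of limits---first $\ell\to\infty$ with $\epsilon$ fixed, and then $\epsilon\downarrow 0$---to be the only nontrivial step; everything else is a routine application of the volume bound defining the Gilbert--Varshamov threshold.
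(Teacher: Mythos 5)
Your proposal is correct and follows essentially the same route as the paper: the paper also splits the average at an $\alpha\ell$ threshold, discards the small indices using $\tilde{D}_i\ge 1$, lower-bounds the remaining terms via the entropy characterization of $\lim_{\ell\to\infty}\tilde{D}_{\lceil\alpha\ell\rceil}/\ell$ (your $H_q^{-1}$ in the guise of $\omega(\alpha)$), and then lets $\alpha\downarrow 0$ after $\ell\to\infty$. The only cosmetic difference is that you make the trivial upper bound $L(q,\ell)\le 1$ explicit, which the paper leaves implicit via $E(G)\in[0,1)$.
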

\begin{proof}
Let
\begin{equation*}
\omega(\alpha) := \lim_{\ell\to\infty}\frac{\tilde{D}_{\lceil \alpha\ell \rceil}}{\ell}
\end{equation*}
for $\alpha\in[0,\,1]$.  
Then, the equality 
\begin{equation*}
h(\omega(\alpha)) + \log_2(q-1) \omega(\alpha) = \alpha\log_2 q
\end{equation*}
holds for any $0\le\omega(\alpha)\le1/2$, where $h(\cdot)$ is the binary entropy function.  
Hence, $\omega(\alpha)=0$ if and only if $\alpha=0$.
The best exponent $L(q, \ell)$ is bounded from below as 
\begin{align*}
L(q, \ell) &\ge \frac1\ell \sum_{i=0}^{\ell-1} \log_\ell \tilde{D}_i\\
&\ge \frac1\ell \sum_{i=\lceil\alpha \ell\rceil}^{\ell-1} \log_\ell \tilde{D}_i\\
&\ge \frac1\ell (1-\alpha)\ell \log_\ell \tilde{D}_{\lceil\alpha \ell\rceil}\\
&= (1-\alpha) \left(1+\log_\ell\left(\tilde{D}_{\lceil\alpha \ell\rceil}/\ell\right)\right),
\end{align*}
where $\tilde{D}_{\lceil\alpha \ell\rceil}/\ell$ approaches a nonzero limit $\omega(\alpha)$ as $\ell\to\infty$ for any fixed $0<\alpha\le1$. 
Hence, 
$\liminf_{\ell\to\infty} L(q,\ell) \ge 1-\alpha$
for any fixed $0<\alpha\le1$.
\end{proof}

\subsection{Reed-Solomon kernel}
Generator matrices of Reed-Solomon codes are considered suitable as a kernel of polar codes
since they have the following two properties:
(1) Low-rate Reed-Solomon codes are subcodes of Reed-Solomon codes with higher rates.
(2) Minimum distance of Reed-Solomon codes coincides with the Singleton bound.
From these properties,
for any $\ell\in\{2,\dots,q\}$,
one can obtain the $q$-ary matrix
$G_{\text{RS}}(q,\ell)$ of size $\ell\times\ell$ whose submatrix consisting of $i$-th row to $(\ell-1)$-th row
is a generator matrix of $[\ell, \ell-i, i+1]_q$ Reed-Solomon code.
We call the $q$-ary matrix $G_\text{RS}(q,\ell)$ the Reed-Solomon matrix.
From Theorem~\ref{thm:eg}, $E(G_\text{RS}(q,\ell))=\log(\ell!)/(\ell\log \ell)$.
The second property mentioned above guarantees the optimality of the Reed-Solomon matrix $G_\text{RS}(q,\ell)$ 
as a kernel of polar codes, 
yielding $L(q,\ell) = \log(\ell!)/(\ell\log\ell)$ for all $\ell\le q$.
One obtains, for example, $L(4,4) \approx 0.573\,12$,
while in the binary case $L(2,31) \lesssim 0.55$ and $L(2,16) \approx 0.518\,28$~\cite{korada2009pcc}. 
This example demonstrates efficiency of using a non-binary kernel in constructing polar codes even for a binary-input DMC.

The original binary $2\times 2$ matrix can be identified as $G_\text{RS}(2,2)$.
This implies that $G_\text{RS}(q,q)$ can be regarded as a natural generalization of the binary $2\times 2$ matrix.
In Subsection \ref{subsec:as},
we see relation between polar codes using $G_\text{RS}(q,q)$ and $q$-ary Reed-Muller codes, 
which has been mentioned by Ar{\i}kan in the binary ($q=2$) case~\cite{5075875}.

\section{Numerical Simulation Results on the $q$-ary Erasure Channel}
\subsection{Recursive calculation of error probability on the $q$-ary erasure channel}
In this section, we evaluate performance of polar codes constructed on the basis of 
the Reed-Solomon matrices. 
We consider the $q$-ary erasure channel for simplicity.
For $\epsilon\in(0,1)$, the $q$-ary erasure channel $W:\mathbb{F}_q\to\mathbb{F}_q\cup\{*\}$ is defined as
\begin{align*}
W(y\mid x) := \begin{cases}
\epsilon, &\text{if } y=*\\
1-\epsilon, &\text{if } y=x\\
0, &\text{otherwise}
\end{cases}
\end{align*}
for any $x\in\mathbb{F}_q$.
When a Reed-Solomon matrix is used as a kernel of polar codes, $P_{\ell^n}^{(i)}$ can be calculated recursively.
Since the Reed-Solomon codes are maximum distance separable (MDS) codes,
Reed-Solomon codes are correctable if and only if the number of erased symbols is smaller than
the minimum distance.
From this observation, one obtains the recursion formula 
\begin{equation*}
P_{\ell^n}^{(a\ell + b)} = \sum_{i=b+1}^\ell \binom{\ell}{i} {P_{\ell^{n-1}}^{(a)}}^i \left(1-P_{\ell^{n-1}}^{(a)}\right)^{\ell-i}
\end{equation*}
for $0\le a\le \ell^{n-1}-1$ and $0\le b\le \ell-1$.

\begin{figure}[p]
\includegraphics[width=\hsize]{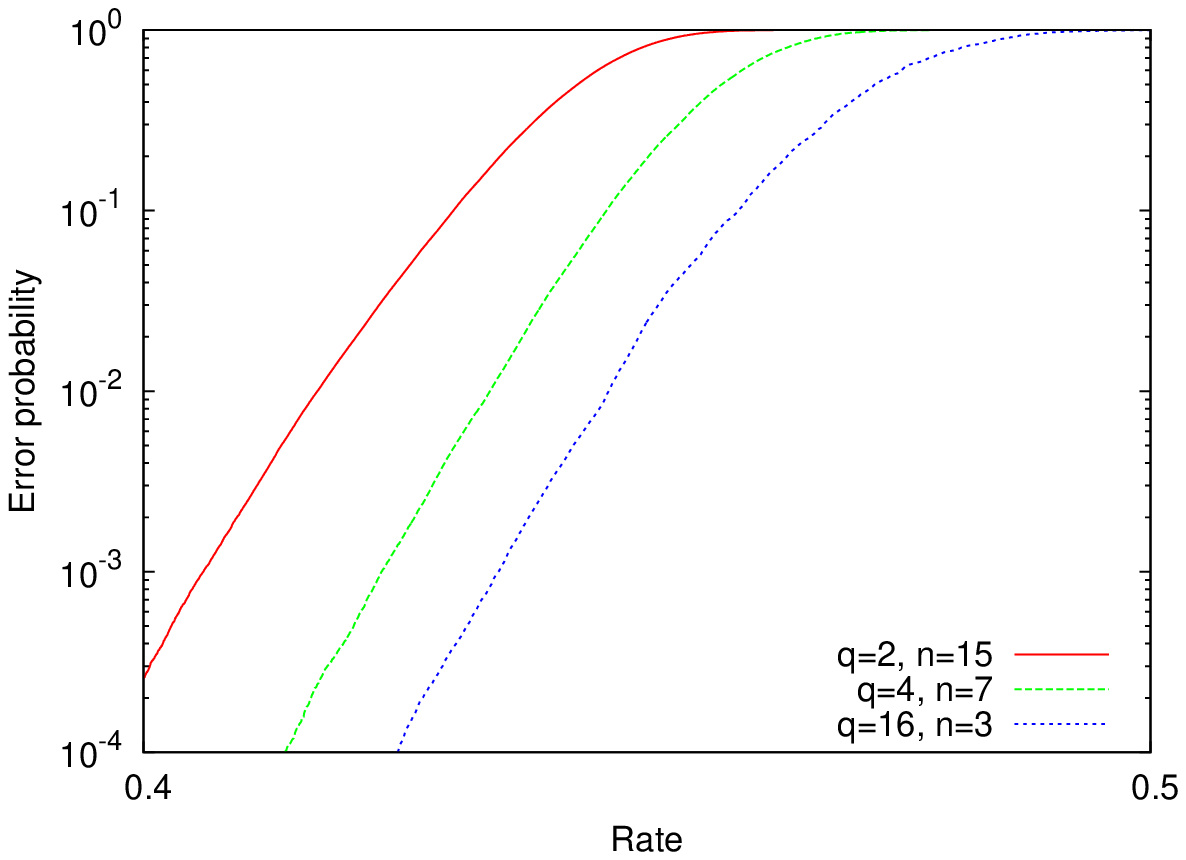}
\caption{Performance comparison of $q$-ary polar codes on $G_\text{RS}(q,q)$ on the $q$-ary erasure channels.
Blocklengths of the binary and 4-ary codes are $2^{15}$ viewed as binary codes.
Blocklength of the 16-ary code is $2^{14}$ viewed as a binary code.
Erasure probabilities of all channels are 0.5.
}
\label{fig:polar-compare}

\includegraphics[width=\hsize]{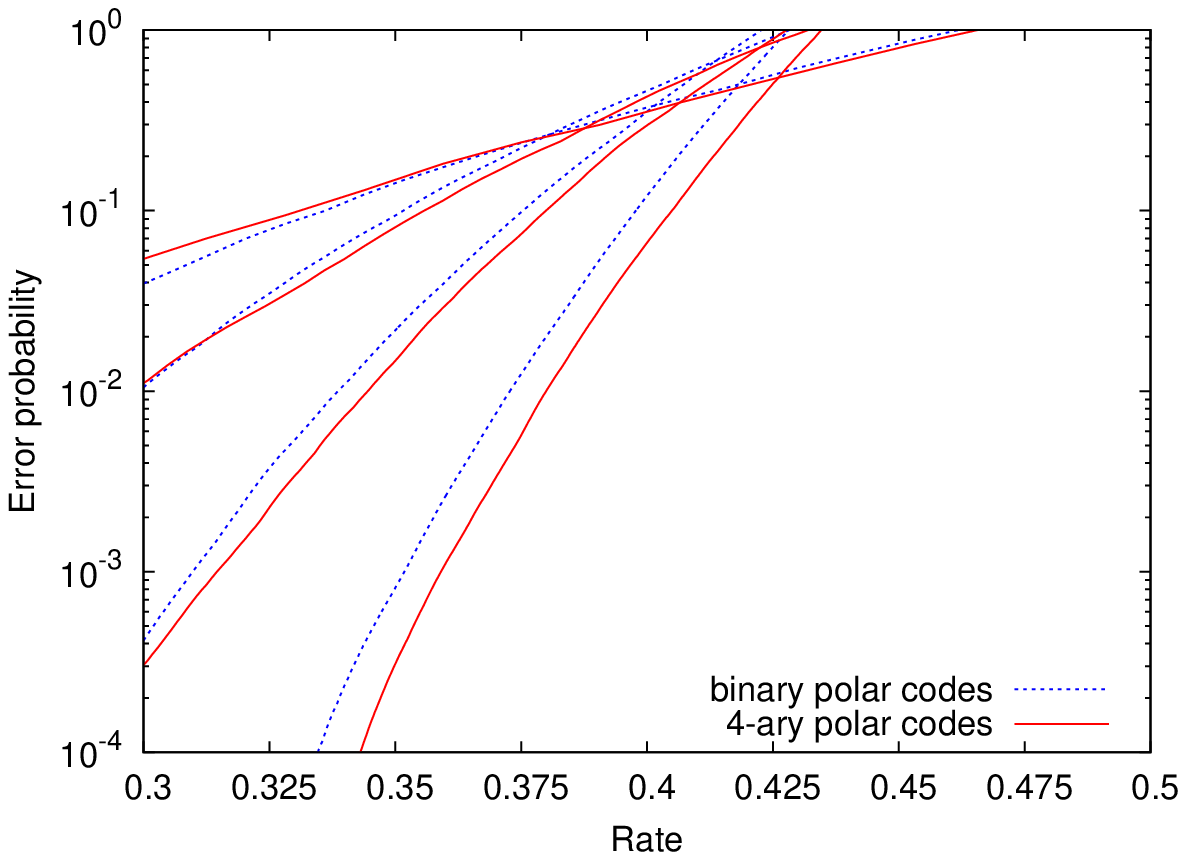}
\caption{
Performance comparison of binary polar codes on $G_\text{RS}(2,2)$ and 4-ary polar codes on $G_\text{RS}(4,2)$
on binary-input AWGN channel.
Blocklengths are $2^7$, $2^9$, $2^{11}$, and $2^{13}$ viewed as binary codes.
The results for 4-ary polar codes and binary polar codes
are plotted by solid curves and dotted curves, respectively.
}
\label{fig:awgn}

\includegraphics[width=\hsize]{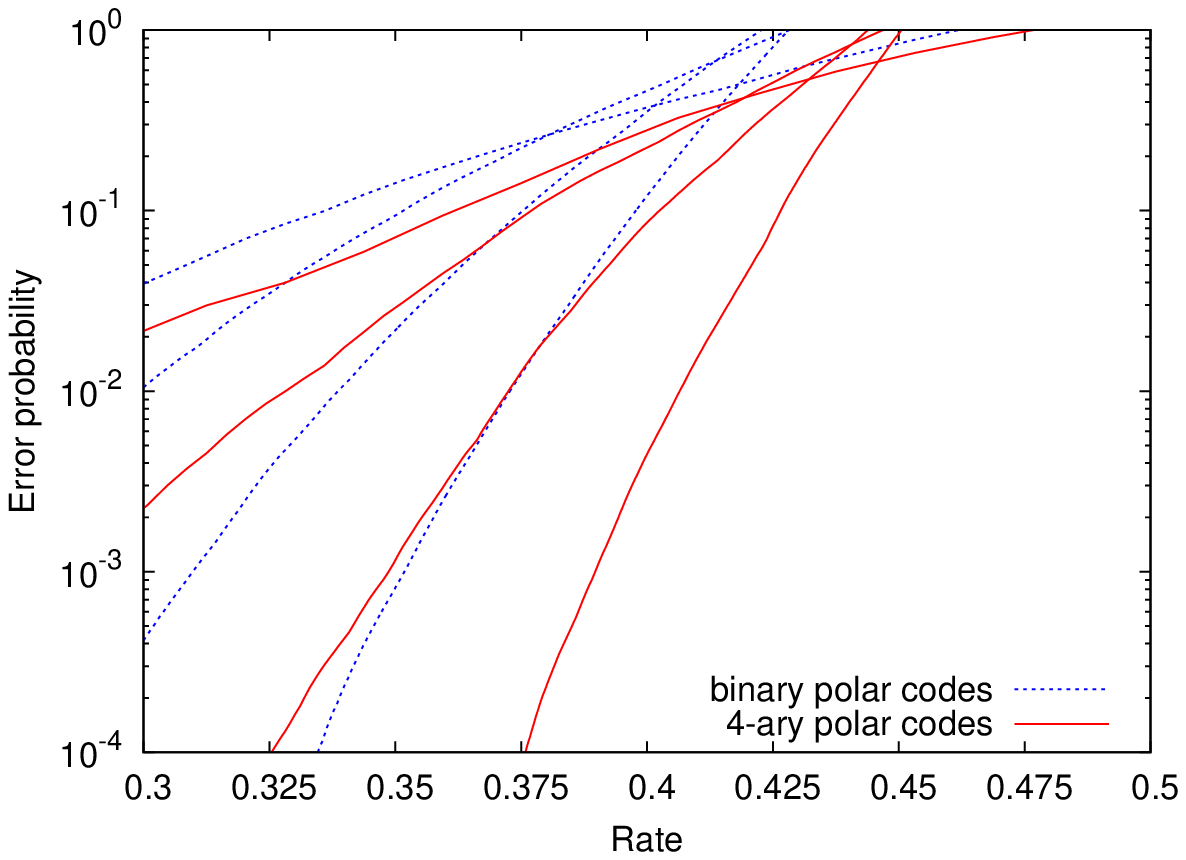}
\caption{
Performance comparison of binary polar codes on $G_\text{RS}(2,2)$ and 4-ary polar codes on $G_\text{RS}(4,4)$
on binary-input AWGN channel.
Blocklengths are $2^7$, $2^9$, $2^{11}$, and $2^{13}$ viewed as binary codes.
The results for 4-ary polar codes and binary polar codes
are plotted by solid curves and dotted curves, respectively.
}
\label{fig:4RS-awgn}
\end{figure}

\subsection{Numerical simulation results}
In this subsection, simulation results of $2^m$-ary polar codes are shown.
The blocklength of $2^m$-ary polar codes is $m\ell^{n}$ viewed as binary codes where $\ell$ is the size of a kernel $G$
and a submatrix of $G^{\otimes n}$ is used for generator matrix of polar codes.

Error probabilities of binary, 4-ary, and 16-ary polar codes using $G_\text{RS}(q,q)$ on the $q$-ary erasure channels
are shown in Fig.~\ref{fig:polar-compare}.
Blocklengths of the binary and 4-ary polar codes are $2^{15}$ viewed as binary codes.
Blocklength of the 16-ary polar code is $2^{14}$ viewed as a binary code.
These results imply that error probability of polar codes using $G_\text{RS}(q,q)$ for a large $q$ is also small in practical blocklength,
although it should be noted that in Fig.~\ref{fig:polar-compare} 
the binary, 4-ary, and 16-ary polar codes are simulated on different channels, 
namely, the binary, 4-ary, and 16-ary erasure channels, respectively.

Blockwise independent $q$-ary channels of blocksize $\ell$ can be 
viewed as a single $q^\ell$-ary memoryless channel, 
so that one can achieve capacity with $q^\ell$-ary polar codes.  
One can alternatively use $q$-ary polar codes with a kernel of size 
multiple of $\ell$, and still provably achieve capacity.
More precisely, a kernel of size multiple of $\ell$ is required only in the first channel transform~\cite{5351487}, \cite{1055718}.
Two binary subchannels constructed from 
the 4-ary erasure channel of erasure probability $\epsilon$ 
by the channel transform with $G_\text{RS}(2,2)$
are both the binary erasure channels of erasure probability $\epsilon$.
Hence, the error probability of the binary polar codes using $G_\text{RS}(2,2)$ on the 4-ary erasure channel is 
equal to the error probability of the binary polar codes using $G_\text{RS}(2,2)$ of the same rate and half blocklength on the binary erasure channel.
Hence, Fig.~\ref{fig:polar-compare} shows that the 4-ary polar codes have significantly better performance than the binary polar codes
on the 4-ary erasure channel.

Simulation results on binary-input AWGN channel are shown in Figs.~\ref{fig:awgn} and \ref{fig:4RS-awgn}.
The standard deviation of noise is 0.978\,65.
The capacity of the binary-input AWGN channel is about 0.5.
In Figs.~\ref{fig:awgn} and \ref{fig:4RS-awgn}, the binary expansion $(\mathbb{F}_4=\{0,1,\alpha,\alpha^2\})\to (\mathbb{F}_2^2=\{00, 01, 10, 11\})$ defined as
$0\to 00, 1\to 01, \alpha\to 10, \alpha^2\to 11$ is used for assigning 4-ary symbols to inputs of the binary-input AWGN channels.
In order to avoid high computational complexity of multi-dimensional density evolution,
$\{P_{\ell^n}^{(i)}\}$ is evaluated by numerical simulation.
The sums $\sum_{i=0}^k \tilde{P}_{\ell^n}^{(i)}$, which are upper bounds of error probabilities are empirically evaluated and plotted,
where $\{\tilde{P}_{\ell^n}^{(i)}\}$ is the sorted version of $\{P_{\ell^n}^{(i)}\}$ 
according to their magnitudes.
The upper bound is considered to be tight if rate is not close to the capacity~\cite{5205857}.
In Fig.~\ref{fig:awgn}, the binary polar codes using $G_\text{RS}(2,2)$ and 4-ary polar codes using $G_\text{RS}(4,2)$ are simulated.
Instead of the standard Reed-Solomon matrix $\begin{bmatrix}1&0\\1&1\end{bmatrix}\in\mathbb{F}_4^4$,
a modified matrix $\begin{bmatrix}1&0\\1&\alpha\end{bmatrix}\in\mathbb{F}_4^4$ is used as $G_\text{RS}(4,2)$
since each bit of the binary image of a 4-ary symbol is independently polarized by $\begin{bmatrix}1&0\\1&1\end{bmatrix}\in\mathbb{F}_4^4$.
In Fig.~\ref{fig:awgn}, there are small differences of performance between the binary and 4-ary polar codes.
In Fig.~\ref{fig:4RS-awgn}, the results of the binary polar codes using $G_\text{RS}(2,2)$ and 4-ary polar codes using $G_\text{RS}(4,4)$ are plotted.
It can be confirmed that 4-ary polar codes using $G_\text{RS}(4,4)$ have significantly better performance than binary polar codes using $G_\text{RS}(2,2)$.
In Figs.~\ref{fig:awgn} and \ref{fig:4RS-awgn}, the blocklengths are $2^7$, $2^9$, $2^{11}$, and $2^{13}$ viewed as binary codes.

\section{Polar Codes and Algebraic Geometry Codes}
\subsection{Polar codes as algebraic geometry codes}\label{subsec:as}
In~\cite{5075875}, Ar{\i}kan mentioned relation between binary polar codes and binary Reed-Muller codes.
The relationship can be naturally generalized to $q$-ary cases.
In this subsection, we overview how $q$-ary Reed-Muller codes are constructed from
$q$-ary Reed-Solomon matrix $G_\text{RS}(q,q)$ using the Kronecker power.
\begin{definition}
Let $q$ be an integer power of a prime.
For any $n\in\mathbb{N}$ and $r=0,\dotsc,(q-1)n$, the $q$-ary $r$-th order Reed-Muller codes are defined as
$\{(p(a_1),\dotsc,p(a_{q^n}))\mid p \in \mathbb{F}_q[X_1,\dotsc,X_n], \deg(p)\le r\}$,
where $\{a_1,\dotsc a_{q^n}\} = \mathbb{F}_q^n$ and where $\deg(p)$ is the degree of a polynomial 
$p \in \mathbb{F}_q[X_1,\dotsc,X_n]$.
\end{definition}
It should be noted that the Reed-Muller codes with $n=1$ are also called extended Reed-Solomon codes.

The binary $2\times 2$ matrix can be regarded as the Reed-Solomon matrix $G_\text{RS}(2,2)$:
\begin{align*}
X:&~\;1~~~ 0\\
\begin{array}{l}
X\\
1
\end{array}
&
\begin{bmatrix}
1 & 0\\
1 & 1
\end{bmatrix}.
\end{align*}

By using the Kronecker product on $G_\text{RS}(2,2)$, a generator matrix of binary 2-variable Reed-Muller codes is obtained as follows.
\begin{align*}
(X_2,X_1):&(1,1)
(1,0)
(0,1)
(0,0)
\\
\begin{array}{l}
X_2X_1\\
X_2\\
\phantom{X_2}X_1\\
\phantom{X_2}1
\end{array}
&
\begin{bmatrix}
1~~& 0~~& 0~~& 0~~\\
1~~& 1~~& 0~~& 0~~\\
1~~& 0~~& 1~~& 0~~\\
1~~& 1~~& 1~~& 1~~
\end{bmatrix}.
\end{align*}
This method of construction of the binary Reed-Muller codes corresponds to the Plotkin construction.
We can see that the binary expansion of $2^n-1-i$ corresponds to
the monomial in the $i$-th row of the Reed-Muller matrix.

Similar relation also holds for non-binary cases.
For example, the Reed-Solomon matrix $G_\text{RS}(3,3)$ is
\begin{align*}
X:&~~2~~~1~~~0\\
\begin{array}{l}
X^2\\
X\\
1
\end{array}
&
\begin{bmatrix}
1 & 1 & 0\\
2 & 1 & 0\\
1 & 1 & 1
\end{bmatrix}.
\end{align*}

A generator matrix of ternary 2-variable Reed-Muller codes is obtained 
by using the Kronecker product on $G_\text{RS}(3,3)$.
\begin{align*}
(X_2,X_1):&(2,2)\,(2,1)\,(2,0)\,(1,2)\,(1,1)\,(1,0)\,(0,2)\,(0,1)\,(0,0)\\
\begin{array}{l}
X_2^2X_1^2\\
X_2^2X_1\\
X_2^2\\
X_2X_1^2\\
X_2X_1\\
X_2\\
\phantom{X_2}X_1^2\\
\phantom{X_2}X_1\\
\phantom{X_2}1
\end{array}
&
\begin{bmatrix}
~1~ & ~1~ & ~0~ & ~1~ & ~1~ & ~0~ & ~0~ & ~0~ & ~0~\\
2 & 1 & 0 & 2 & 1 & 0 & 0 & 0 & 0\\
1 & 1 & 1 & 1 & 1 & 1 & 0 & 0 & 0\\
2 & 2 & 0 & 1 & 1 & 0 & 0 & 0 & 0\\
1 & 2 & 0 & 2 & 1 & 0 & 0 & 0 & 0\\
2 & 2 & 2 & 1 & 1 & 1 & 0 & 0 & 0\\
1 & 1 & 0 & 1 & 1 & 0 & 1 & 1 & 0\\
2 & 1 & 0 & 2 & 1 & 0 & 2 & 1 & 0\\
1 & 1 & 1 & 1 & 1 & 1 & 1 & 1 & 1\\
\end{bmatrix}
\end{align*}
Similarly to the binary case,
the ternary expansion of $3^n-1-i$ corresponds to
the monomial in the $i$-th row of the Reed-Muller matrix.
These observations imply that polar codes using Reed-Solomon matrices can be naturally regarded as codes spanned by polynomials.
A similar property also holds for the case using matrices related with Hermitian codes, as discussed in the next subsection.

Note that the selection rule of rows from $G_\text{RS}(q,q)^{\otimes n}$ for Reed-Muller codes does not maximize the minimum distance unless $q=2$.
In order to maximize the minimum distance, rows have to be chosen according to $\prod_{j=1}^n (i_j+1)$ where $i_j$ is
the $j$-th digit of the $q$-ary expansion of the index $i$ of a row.
In this paper, we call codes based on the rule which maximizes minimum distance hyperbolic codes,
which are also called Massey-Costello-Justesen codes~\cite{massey1973polynomial} and hyperbolic cascaded Reed-Solomon codes~\cite{saints1993hyperbolic}.
On fixed positive rate, the minimum distance of Reed-Muller codes is $q^{\frac12n+o(n)}$~\cite{1054127} while the minimum distance of
polar codes and hyperbolic codes are $q^{E(G_\text{RS}(q,q))n+o(n)}$.
Hence, Reed-Muller codes have asymptotically worse performance than polar codes in non-binary case.

\subsection{Polar codes on algebraic geometry codes}
In order to obtain large exponents, 
algebraic geometry codes are considered suitable as kernels of polar codes, since they can have large minimum distance
and often have the same nested structure as Reed-Solomon codes.
In order to demonstrate feasibility of constructing polar codes using algebraic geometry codes,
we use Hermitian codes as an example.
\begin{definition}
Let $r$ be a power of a prime and $q=r^2$.
A function $\rho:\mathbb{F}_q[X_1,X_2]\to\mathbb{N}\cup\{0,-\infty\}$ is defined as
$\rho(0) := -\infty$, $\rho(X_1^iX_2^j) := ir+j(r+1)$ and $\rho(\sum_i a_i X_1^{b_i}X_2^{c_i}) := \max_{i: a_i\ne 0} \rho(X_1^{b_i}X_2^{c_i})$.
For any $0\le m\le r^3+r^2-r-1$, the $q$-ary Hermitian codes are defined as
$\{(p(a_1),\dotsc,p(a_{r^3}))\mid p \in \mathbb{F}_q[X_1,X_2],\, \deg_1(p) < q,\, \deg_2(p) < r,\, \rho(p)\le m\}$,
where $\{a_1,\dotsc a_{r^3}\}$ is a set of zero points of $X_1^{q+1}-X_2^q-X_2$ in $\mathbb{F}_q$
and where $\deg_1(p)$ and $\deg_2(p)$ are the degree of $X_1$ of $p$ and the degree of $X_2$ of $p$, respectively.
\end{definition}

For a prime power $r$, 
a matrix $G_\text{H}(r^3)$ of size $r^3\times r^3$ can be defined for the $r^2$-ary Hermitian codes, 
just as the Reed-Solomon matrix $G_\text{RS}(q, \ell)$ has been defined 
for the $q$-ary Reed-Solomon codes.
In this paper, we call the matrix $G_\text{H}(r^3)$ the $r^2$-ary Hermitian matrix.
The Hermitian matrix $G_\text{H}(8)$ on $\mathbb{F}_4 = \{0,1,\alpha,\alpha^2\}$ is the following.
\begin{align*}
&(X_2,X_1):\\
&\hspace{3.3em}(\alpha^2,\alpha^2)\,(\alpha,\alpha^2)\,(\alpha^2,\alpha)\,(\alpha,\alpha)~(\alpha^2,1)~(\alpha,1)\;(1,0)\,(0,0)\\
&
\begin{array}{l}
\hspace{-0em}X_2X_1^3\\
\hspace{-0em}X_2X_1^2\\
\hspace{-0em}\phantom{X_2}X_1^3\\
\hspace{-0em}X_2X_1\\
\hspace{-0em}\phantom{X_2}X_1^2\\
\hspace{-0em}X_2\\
\hspace{-0em}\phantom{X_2}X_1\\
\hspace{-0em}\phantom{X_2}1\hspace{4em}
\end{array}
\hspace{-3em}
\begin{bmatrix}
\;~\alpha^2~\; & \;~\alpha~\; & \;~\alpha^2~\; & \;~\alpha~\; & \;~\alpha^2~\; & \;~\alpha~\; & \;0~ & ~0~\\
1 & \alpha^2 & \alpha & 1 & \alpha^2 & \alpha & 0 & 0\\
1 & 1 & 1 & 1 & 1 & 1 & 0 & 0\\
\alpha & 1 & 1 & \alpha^2 & \alpha^2 & \alpha & 0 & 0\\
\alpha & \alpha & \alpha^2 & \alpha^2 & 1 & 1 & 0 & 0\\
\alpha^2 & \alpha & \alpha^2 & \alpha & \alpha^2 & \alpha & 1 & 0\\
\alpha^2 & \alpha^2 & \alpha & \alpha & 1 & 1 & 0 & 0\\
1 & 1 & 1 & 1 & 1 & 1 & 1 & 1\\
\end{bmatrix}.
\end{align*}
From~\cite{yang-true}, the minimum distance of 4-ary Hermitian codes can be calculated as
$(D_0,\dotsc,D_{\ell-1})=(1,2,2,3,4,5,6,8)$.
Note that each $D_i$ is the largest under given blocklength and dimension
since the MDS conjecture has been proved for $q=4$~\cite{macwilliams1988tec}.
However, $E(G_\text{H}(8)) = L(4,8) \approx 0.562\,161 < 0.573\,12 \approx L(4,4)$.
Values of $E(G_\text{RS}(2^{m},2^{m}))$ and $E(G_\text{H}(2^{3m/2}))$ are shown in Table~\ref{table:e} for $m=2,4,6,8$.
The exponent of the Hermitian matrix $G_\text{H}(2^{3m/2})$ exceeds 
the exponent of the Reed-Solomon matrix $G_\text{RS}(2^{m},2^{m})$ for $m\ge 4$.
The method of shortening~\cite{korada2009pcc} may be useful for obtaining smaller matrices from Hermitian matrices,
although the exponent may also be smaller.

\begin{table}[tb]
\renewcommand{\arraystretch}{1.3}
\caption{
Exponents of Reed-Solomon kernels and Hermitian Kernels on $\mathbb{F}_{2^m}$.
}
\label{table:e}
\begin{tabular}{|l|l|l|l|l|}
\hline
$m$ & 2 & 4 & 6 & 8\\ 
\hline
$E(G_\text{RS}(2^m,2^m))$ & 0.573\,120 & 0.691\,408 & 0.770\,821 & 0.822\,264\\ 
\hline
$E(G_\text{H}(2^{3m/2}))$  & 0.562\,161 & 0.707\,337 & 0.802\,760 & 0.859\,299\\ 
\hline
\end{tabular}
\end{table}

\section{Conclusion}
We have shown error probabilities of $q$-ary polar codes using Reed-Solomon matrices as kernels by numerical simulations.
It is confirmed by numerical simulations
that 4-ary polar codes using Reed-Solomon matrix have significantly better performance than binary polar codes using Reed-Solomon matrix.
We have further shown that kernels with larger exponents can be obtained by using Hermitian codes.
This implies that algebraic geometry codes might be useful as kernels of polar codes with large exponents.

\section*{Acknowledgment}
Support from the Grant-in-Aid for Scientific Research (C), the Japan Society for the Promotion of Science, Japan
(No.~22560375) is acknowledged.
RM acknowledges support of Grant-in-Aid for JSPS Fellows (No.~22$\cdot$5936).

\bibliographystyle{IEEEtran}
\bibliography{IEEEabrv,ldpc}

\end{document}